\theoremstyle{plain}
\newtheorem{thm}{Theorem}[]
\theoremstyle{definition}
\newtheorem{deff}{Definition}[]
\newtheorem{remark}{Remark}
\renewcommand{\iff}{\Longleftrightarrow}
\newcommand{\liff}{\Longleftrightarrow}
\newcommand{\lied}{\pounds}
\newcommand{\defeq}{\ \vcentcolon=\ }
\newcommand{\defeqc}{\ \stackrel{\Sc}{\vcentcolon=}\ }
\newcommand{\scri}{\mathscr{J}}
\newcommand{\pt}[2]{\tensor{\hat{#1}}{#2}}
\newcommand{\ctr}[3]{\tensor[#1]{#2}{#3}}
\newcommand{\ct}[2]{\tensor{#1}{#2}}
\newcommand{\ctp}[2]{\tensor{\overline{#1}}{#2}}
\newcommandx{\ps}[3][2=, 3=]{\hat{#1}^{#2}_{#3}}
\newcommandx{\cs}[3][2=, 3=]{#1^{#2}_{#3}}
\newcommand{\eqs}{\ \stackrel{\scri}{=}\ }
\newcommand{\eqc}{\ \stackrel{\Sc}{=}\ }
\newcommand{\cd}[1]{\tensor{\nabla}{#1}}
\newcommand{\cds}[1]{\tensor{\overline{\nabla}}{#1}}
\newcommand{\cdc}[1]{\tensor{D}{#1}}
\newcommand{\commute}[2]{\left[#1,#2\right]}
\newcommand{\Q}{\mathcal{Q}}
\newcommand{\T}{\mathcal{T}}
\newcommand{\D}{\mathcal{D}}
\newcommand{\W}{\mathcal{W}}
\newcommand{\Z}{\mathcal{Z}}
\newcommand{\Sc}{\mathcal{S}}
\newcommand{\Pc}{\mathcal{P}}
\newcolumntype{M}[1]{>{\centering\arraybackslash}m{#1}}
\newcolumntype{N}{@{}m{0pt}@{}}
\def\be{\begin{equation}}
\def\ee{\end{equation}}
\def\bea{\begin{eqnarray}}
\def\eea{\end{eqnarray}}
\def\bean{\begin{eqnarray*}}
\def\eean{\end{eqnarray*}}
\def\Dated@name{\normalsize} 
\let\@fnsymbol\@roman
\begin{document}

\title{\Large A novel characterisation of gravitational radiation in asymptotically flat space-times}
	
\author{Francisco Fernández-Álvarez}
\email[Email address: ]{francisco.fernandez@ehu.eus} 
\author{José M. M. Senovilla} 
\email[Email address: ]{josemm.senovilla@ehu.eus}
\affiliation{Departamento de Física Teórica e Historia de la Ciencia,\\ Universidad del País Vasco UPV/EHU\\ Apartado 644, 48080 Bilbao, Spain }

\date{\today{}}

	\begin{abstract}
			\begin{center}
			\rule{\linewidth/4}{1pt}
			\end{center}
	 A novel criterion to determine the presence of gravitational radiation arriving to, or departing from, null infinity of any weakly asymptotically-simple space-time with vanishing cosmological constant is given. The quantities involved are geometric, of tidal nature, with good gauge behaviour and univocally defined at null infinity. The relationship with the classical characterisation using the News tensor is analysed in detail. A new balance law at infinity is presented, which may be useful to define `radiation states'.
	\end{abstract}
\maketitle
	
	\section{Introduction and setting}
		The confidence on the existence of gravitational waves received a tremendous boost during the 1950-60's with the works of Trautman \cite{Trautman58}, Pirani  \cite{Pirani57}, Bel \cite{Bel1962}, Bondi \cite{Bondi1962}, Sachs \cite{Sachs1962}, Newman \cite{Penrose62} and others (see e.g. \cite{Zakharov}). A robust, covariant approach to the gravitational radiation and the structure of the conformal boundary (timelike, spacelike, and null infinity) of asymptotically flat space-times was then developed by Penrose \cite{Penrose65,Penrose68} and nicely formulated by Geroch \cite{Geroch1977}. Furthermore, a description of the radiative degrees of freedom in terms of the intrinsic geometry of null infinity was given by Ashtekar \cite{Ashtekar81}. As a main outcome of those works the presence of gravitational radiation `escaping' from --or `entering' into-- the space-time was successfully characterised by means of the so-called {\em News tensor}.
		
		However, this formalism only applies to the case with vanishing cosmological constant and we wonder if an alternative description of the radiation at infinity is feasible; in particular, one that may perform equally well in the presence of a positive cosmological constant too. The aim of this communication is to present our proposal, based on the Bel-Robinson tensor \cite{Bel1958}, for such an alternative in the well-established asymptotically flat case; the corresponding situation for a positive cosmological constant will be considered in a subsequent paper \cite{Fernandez-Alvarez_Senovilla}. 

 From a physical point of view, the News tensor carries the information about the energy-momentum radiated away by an isolated system, while the Bel-Robinson tensor describes the energy-momentum of the {\em tidal} gravitational field ---for historical reasons, one uses the name `super-energy' for this. It is a fundamental fact, due to the Equivalence Principle, that there does not exist any notion of {\em local, pointwise} gravitational energy-momentum. There are, however, several notions of quasilocal energy-momentum, usually associated to closed 2-dimensional surfaces \cite{Szabados2004}. On the other hand, the super-momentum constructed from the Bel-Robinson tensor is local and its vanishing is unambiguous at any point of the space-time, stating whether or not the tidal forces vanish there ---in vacuum. Given that actual measurements of gravitational waves are basically of tidal nature, it seems like a good idea to explore the Bel-Robinson tensor as a viable object detecting the existence of gravitational radiation. This is our main argument, and we are going to prove that it does work. Actually, the relationship between super-energy and quasilocal energy has been analyzed for many years \cite{Horowitz82,Senovilla2000a,Szabados2004} in vacuum.		 
		
		Throughout the paper, we work in the conformal completion $ (\cs{M},\ct{g}{_{\alpha\beta}}) $ with boundary $\scri $ --null infinity-- of any weakly asymptotically simple space-time $ (\ps{M},\pt{g}{_{\alpha\beta}}) $ with zero cosmological constant $ \Lambda=0 $ (for a detailed description of these completions see, e.g., \cite{Frauendiener2004} or \cite{Kroon}).\footnote{Our signature is $ (-,+,+,+) $ and the Riemann tensor is defined by $ \commute{\cd{_\alpha}}{\cd{_\beta}}\ct{v}{_\gamma} = \ct{R}{_{\alpha\beta\gamma}^\mu} \ct{v}{_\mu}$, where $ \cd{_\alpha} $ is the covariant derivative on $ \left(\cs{M},\ct{g}{_{\alpha\beta}} \right) $. Also, $R_{\alpha\gamma}=\ct{R}{_{\alpha\mu\gamma}^\mu}$ and $R=g^{\alpha\gamma} R_{\alpha\gamma}$.}  Both metrics are related by $ \ct{g}{_{\alpha\beta}} = \Omega^2\pt{g}{_{\alpha\beta}} $ on $\ps{M}$, where the conformal factor $ \Omega $ is a function on $M$ satisfying  $\Omega>0$ on $\ps{M}$, $\Omega=0$ at $\scri$ and $n_\alpha :=\cd{_\alpha}\Omega\neq0$ at $\scri$. There exists a conformal rescaling freedom, $ \Omega \rightarrow \omega\Omega $ with $ \omega>0$ everywhere.  We fix partially this freedom by considering only those conformal factors satisfying $ \cd{_\alpha}\ct{n}{_\beta}\eqs 0 $. 
		The remaining allowed transformations are those preserving the condition $ \lied_{\vec{n}} \omega \eqs 0$, which will be the only ones considered from now on. 
		
		One can prove that the one-form $ \ct{n}{_\alpha}$, clearly normal to $\scri$, is null there $n_\alpha n^\alpha  \eqs 0$, and in our conventions $ \ct{n}{^\alpha} $ is future pointing. Hence, $\scri$ is a lightlike hypersurface whose first fundamental form, denoted by $\bar g_{ab}:= g_{\mu\nu}e^\mu{}_a e^\nu{}_b$, is degenerate (we use Latin indices on $ \scri $, $ a, b,\dots =1,2,3 $). Here $ \{\ct{e}{^\alpha_a}\} $, defined {\em only} on $\scri$,  is a basis of the set of vector fields tangent to $\scri$, $n_\alpha \ct{e}{^\alpha_a}=0$ ; obviously $\ct{n}{^\alpha} \eqs \ct{n}{^a}\ct{e}{^\alpha_a} $, and $n^a$ is the degeneration vector field: $ \ct{n}{^a}\ctp{g}{_{ab}}=0 $. Thus, $n^a$ is the null generator of $\scri$. It follows from our choice of gauge that the second fundamental form of $ \scri $, $ \ct{K}{_{ab}}:= \ct{e}{^\mu_a}\ct{e}{^\nu_b} \cd{_\mu}\ct{n}{_\nu} $ (which is intrinsic to $\scri$, as $\lied_{\vec{n}} \bar{g}_{ab} =2K_{ab}$), vanishes. This implies that the Levi-Civita connection of $g_{\alpha\beta}$ induces a torsion-free connection on $ \scri $ by means of \cite{Geroch1977,Ashtekar81}
$$
				 \forall X,Y\in T_\scri (M), \hspace{5mm} \cds{_X} Y := \nabla_X Y 
$$
or equivalently $\ct{e}{^\alpha_a}\nabla_\alpha \ct{e}{^\beta_b}=\bar{\gamma}^c_{ab} \ct{e}{^\beta_c}$, $\bar{\gamma}^c_{ab}$ being the connection coefficients in the chosen basis. A volume form $\epsilon_{abc}$ for $\scri$ is also defined through the one of the space-time, $ \ct{\eta}{_{\alpha\beta\gamma\delta}} $, by (e.g.  \cite{Mars_Senovilla_1993})	
			\begin{equation}
			-\ct{n}{_\alpha}\ct{\epsilon}{_{abc}}\eqs \ct{\eta}{_{\alpha\mu\nu\sigma}}\ct{e}{^\mu_a}\ct{e}{^\nu_b}\ct{e}{^\sigma_c}\quad .\label{vol}
			\end{equation}	
		The choice of gauge implies $ \cds{_a}\ct{n}{^b}= 0 $ and one also has
			\begin{equation}
			\cds{_c}\bar g_{ab} =0, \hspace{1cm} \lied_{\vec n} \bar g_{ab}=0, \hspace{1cm}  \cds{_d}\epsilon_{abc} =0 \label{der=0}
			\end{equation}
so that the connection is volume preserving. This will allow us to use the Gauss law at $ \scri $ to present the balance equation \eqref{balanceint1} below.

		The topology of $ \scri $ is $ \mathbb{R}\times \mathbb{S}^2$ \cite{hawking_ellis_1973,Penrose68}, and it has two disconnected parts representing future and past null infinity. Our results apply to both of them but, for the sake of shortness and clarity, we will just consider outgoing gravitational radiation at future null infinity, the past results are simply analogous by time reversal. A cross-section $\Sc$ on $\scri$ is any closed (compact with no boundary) surface transversal to $n^a$ everywhere; we call these cross-sections `cuts', and they are topological spheres $\mathbb{S}^2$ and spacelike surfaces in $(M,g_{\alpha\beta})$. Due to (\ref{der=0}) all possible cuts are isometric, with a 1st fundamental form which is essentially the non-degenerate part of $\bar{g}_{ab}$; such inherited positive-definite metric on each $\Sc$ is denoted by $q_{AB}$ (capital Latin indices are used on the cuts, $ A,B,\dots=2,3 $), and its covariant derivative by $D_A$. 
	Given any cut $\Sc$, there is a {\em unique} lightlike vector field $\ell^\alpha$ 
		orthogonal to $\Sc$ and normalised as $ \ct{n}{_\mu} \ct{\ell}{^\mu}\eqc -1$. Thus, $\{\vec n,\vec\ell\}$ is a basis of the normal space to the cut. Let $\{E^\mu_A\}$ denote a couple of linearly independent vector fields tangent to $\Sc$. Since $ \ct{E}{^\mu_A}\ct{n}{_\mu}\eqc 0 $, one has $\ct{E}{^\mu_A}\eqc\ct{E}{^a_A}\ct{e}{^\mu_a}$, where $\{ \ct{E}{^a_A} \}$ is a basis of vector fields tangent to $\Sc$ considered within $\scri$.

		In order to put our results in context it is convenient to recall the definitions and properties of the News tensor and of the Bel-Robinson tensor. We devote the next subsections to this purpose ---for further details see \cite{Geroch1977} and \cite{Senovilla2000a}, respectively, and references therein. 
		 
	\subsection{The News tensor}			
		Concerning the news tensor field, usually denoted by $ \ct{N}{_{ab}} $, one of its possible definitions is the projection to $ \scri $ of the Schouten tensor $S_{\mu\nu}:=(R_{\mu\nu}-(R/6)g_{\mu\nu})/2$, {\em gauge corrected} \cite{Geroch1977,Ashtekar81}. It is a symmetric and gauge-invariant tensor field on $\scri$ satisfying $ \ct{n}{^a}\ct{N}{_{ab}}= 0 $. In general, however, $\lied_{\vec n} N_{ab} \neq 0$, so that $N_{ab}$ may change from one cut to another, which is a key point. Given a cut $\Sc\subset \scri$, the pullback of the news tensor to $\Sc$ will be denoted\footnote{As stated, $N_{AB}$ depends on the cut; generally, we do not use any label to make this explicit to avoid a messy notation, but we will use it when it is necessary.}  by $ \ct{N}{_{AB}} \eqc N_{ab}E^a{}_A E^b{}_B$, which is a symmetric and  traceless ($ \ct{q}{^{AB}}\ct{N}{_{AB}} = 0 $) tensor field on $\Sc$. We will use the notation
		\be
		\dot N_{AB}\defeqc E^a{}_A E^b{}_B \lied_{\vec n} N_{ab}\label{Ndot}
		\ee
		and $q^{AB}\dot N_{AB}=0$ as is easily seen.
		
		The \emph{classical radiation condition} states that  there is no gravitational radiation on a given cut if and only if the News tensor, or equivalently $ \ct{N}{_{AB}} $, vanishes on that cut:
			\begin{equation*}
				N_{AB} =0 \iff \ct{N}{_{ab}}\eqc 0 \iff \text{no gravitational radiation on } \Sc.
			\end{equation*}
Observe that, $N_{ab}$ being a tensor field on $\scri$, its vanishing at any given point is independent of any basis and thus a fully local ---pointwise--- statement. Nevertheless, as argued in \cite{Penrose1986}, one cannot aspire to {\em localize} gravitational energy at a point on $\scri$. Hence, the vanishing of $N_{ab}$ at a given point has no meaning in principle, but its vanishing on a closed surface, {\em on a cut}, does. In this sense, the news tensor field is related to the quasilocal energy-momentum properties of the gravitational field at $\scri$.
			
		\subsection{The Bel-Robinson tensor}	
		Consider now the Bel-Robinson tensor, defined by \cite{Bel1958,Bel1962,BonillaSenovilla97,Senovilla2000a}
			\begin{equation}\label{BR-tensor}
				\ct{\T}{_{\alpha\beta\gamma\delta}}\defeq \ct{C}{_{\alpha\mu\gamma}^\nu}\ct{C}{_{\delta\nu\beta}^\mu} + \ctr{^*}{C}{_{\alpha\mu\gamma}^\nu}\ctr{^*}{C}{_{\delta\nu\beta}^\mu} 
			\end{equation}	
			where $\ct{C}{_{\alpha\mu\gamma}^\nu}$ is the Weyl tensor on $(M,g_{\mu\nu})$ and $\ctr{^*}{C}{_{\alpha\mu\gamma}^\nu}:=(1/2)\eta_{\alpha\mu\rho\sigma}C^{\rho\sigma}{}_{\gamma}{}^\nu$ its dual. $\ct{\T}{_{\alpha\beta\gamma\delta}}$ is a conformally invariant, fully symmetric and traceless tensor field, and it is also divergence-free in Einstein spaces; it is assumed to describe tidal energy-momentum properties of the pure gravitational field. Given an arbitrary unit, future-pointing vector field $ \ct{v}{^\alpha} $, one can define the super-momentum relative to $v^\alpha$
			\begin{equation}\label{super-momentum}
				\ct{\Pc}{^\alpha} \defeq -  \ct{v}{^\beta}\ct{v}{^\gamma}\ct{v}{^{\delta}}\ct{\T}{^\mu_{\beta\gamma\delta}}= \cs{W}\ct{v}{^\alpha} + \ct{P}{^\alpha}	
			\end{equation}
			whose timelike component $W$ along $v^\alpha$ is called the super-energy density and its spatial part (with respecto to $v^\alpha$) $P^\alpha$ is called the super-Poynting vector field, relative to $v^\alpha$  \cite{Bel1962,Maartens1998,Senovilla2000a,Alfonso2007}:
			\begin{eqnarray}
				\cs{W}&\defeq &\ct{v}{^\alpha}\ct{v}{^\beta}\ct{v}{^\gamma}\ct{v}{^\delta}\ct{\T}{_{\alpha\beta\gamma\delta}}\quad ,\label{W}\\
				\ct{P}{^\alpha} &\defeq &-(\delta_\nu^\alpha + \ct{v}{^\alpha}\ct{v}{_\mu})\ct{v}{^\beta}\ct{v}{^\gamma}\ct{v}{^{\delta}}\ct{\T}{^\mu_{\beta\gamma\delta}}\quad \label{P}.	
			\end{eqnarray}
		The Bel-Robinson tensor satisfies a dominant property \cite{Senovilla2000a} which implies that $\ct{\Pc}{^\alpha}$ is always causal and future pointing, so that in particular the super-energy density is always non-negative $W\geq 0$. Actually, its vanishing is equivalent to the absence of the Weyl tensor at any point of the space-time:
			\begin{equation}
				\ct{\T}{_{\alpha\beta\gamma\delta}} = 0 \liff \cs{W}=0 \liff \ct{C}{^\alpha_{\beta\gamma\delta}} =0\quad .
			\end{equation}	
		Concerning the vanishing of the super-Poynting vector $P^\alpha$, Bel gave the following criterion \cite{Bel1962}:
		\begin{deff}
			There is a state of intrinsic gravitational radiation at a point $ p $ when $ \ct{P}{^\alpha}|_p \neq 0$ for all unit timelike $\ct{v}{^\alpha}  $.
		\end{deff}
		Equivalently, there is no intrinsic state of gravitational radiation at a point $p$  if there exists a unit timelike $v^\alpha$ such that $\ct{P}{^\alpha}|_p =0$. Observe that this criterion is purely local, valid at any space-time point. The idea behind the criterion is that $\ct{\Pc}{^\alpha}$ provides the space-time direction of propagation of gravity for the observer $v^\alpha$, and thus $\ct{\Pc}{^\alpha}$ proportional to $v^\alpha$ implies that there is only super-energy, but no {\em spatial} momentum. For more details on Bel's criterion see \cite{FerrandoSaez}. Even though the super-momentum cannot be identified with a momentum vector for gravity pointwise, there is a relation between super-momentum and quasilocal momentum, for instance on closed surfaces \cite{Horowitz82,Szabados2004}, that we would like to exploit. 
		As we will presently see, our proposal inherits the spirit of Bel's criterion and leads to an intrinsic characterization of the presence of gravitational radiation at infinity completely equivalent to the classical one. 
		
	\section{Radiant super-momentum}	
		Since we want to study the gravitational field at infinity, and the standard Bel-Robinson tensor \eqref{BR-tensor} vanishes at $ \scri$ --due to the known property $ \ct{C}{_{\beta\gamma\delta}^\alpha}\eqs 0 $ \cite{Geroch1977}--, we introduce the \emph{rescaled Bel-Robinson tensor}:
			\begin{equation}
				\ct{\D}{_{\alpha\beta\gamma\delta}} \defeq \Omega^{-2}\ct{\T}{_{\alpha\beta\gamma\delta}} =  \ct{d}{_{\alpha\mu\gamma}^\nu}\ct{d}{_{\delta\nu\beta}^\mu} + \ctr{^*}{d}{_{\alpha\mu\gamma}^\nu}\ctr{^*}{d}{_{\delta\nu\beta}^\mu}
			\end{equation}
		where $ \ct{d}{_{\beta\gamma\delta}^\alpha} \defeq \Omega^{-1}\ct{C}{_{\beta\gamma\delta}^\alpha}$ is the rescaled Weyl tensor.
				$\ct{\D}{_{\alpha\beta\gamma\delta}}$ is conformally invariant, fully symmetric, traceless and, using the property $\nabla_\rho d_{\mu\nu\tau}{}^\rho \stackrel{\hat M}{=} (1/\Omega)\hat\nabla_\rho \hat C_{\mu\nu\tau}{}^\rho$, its divergence vanishes at $ \scri $ if the Cotton tensor $2\hat\nabla_{[\alpha}\hat S_{\beta]\gamma}$ on $(\hat M,\hat g_{\alpha\beta})$ has appropriate decaying conditions towards infinity. $ \ct{\D}{_{\alpha\beta\gamma\delta}}  $ is regular at $ \scri $, non-vanishing in general. Its gauge behaviour is 
				$$\ct{\D}{_{\alpha\beta\gamma\delta}}\rightarrow \ct{\D}{_{\alpha\beta\gamma\delta}}/\omega^2.$$ 

		One can argue that there may be incoming radiation at $ \scri $ propagating along $ \ct{n}{^a} $ \cite{Newman1968}. Hence, if one wishes to study gravitational radiation at $\scri$ with the standard super-momentum  (\ref{super-momentum}) for an observer $ \ct{v}{^\alpha} $, this will contain information about incoming and outgoing radiation. Choose then a family of accelerated observers whose velocity vectors $ \ct{u}{^\alpha} $ approach (incoming) null cones as they are further away, and thus they will `reach' $ \scri $, in the limit to infinity, such that $ \ct{u}{^\alpha} $ becomes collinear with $ \ct{n}{^\alpha} $ there. Then, this `limit' observer  will be lightlike and tangent to $ \scri $. Qualitatively, we could say that the limit observer travels alongside the incoming radiation and thus she is insensible to it. Therefore, the unique contribution to the gravitational radiation that she could detect comes from the  outgoing components of the gravitational field. 
		This motivates us to define the fundamental object in this work, that we call the \emph{radiant super-momentum}:
			\begin{equation}\label{radiant-super-momentum}
				\ct{\Q}{^\alpha}\defeq -\ct{n}{^\mu}\ct{n}{^\nu}\ct{n}{^\rho}\ct{\D}{^\alpha_{\mu\nu\rho}}\quad .
			\end{equation}
		This definition is similar to that of the standard super-momentum \eqref{super-momentum}, however, there are three important remarks to be made. The first one is that the radiant super-momentum is defined using the rescaled Bel-Robinson tensor, which makes it regular and in general non-vanishing at $ \scri $, while \eqref{super-momentum} vanishes identically at $ \scri $. The second one is that $ \ct{n}{_\alpha} $ is lightlike at $ \scri $ and, therefore, (\ref{radiant-super-momentum}) corresponds to a `lightlike decomposition' of the rescaled Bel-Robinson tensor at $ \scri $, in contrast to the usual $ 3+1 $ splitting. Finally, and perhaps its most distinguishing attribute, the radiant super-momentum is geometrically well and uniquely defined at $\scri$, $ \ct{n}{^\alpha} $ being tangent to the null generators of $ \scri $; in contrast, \eqref{super-momentum} is observer dependent. Remarkably, it can be shown 
that, as a consequence of the peeling theorem \cite{Sachs1962,Penrose1986}, every super-momentum vector field (\ref{super-momentum}) on the physical space-time $(\hat M,\hat g_{\alpha\beta})$ converges in direction to $ \pt{\Q}{^\alpha}\defeq  -\ct{n}{^\mu}\ct{n}{^\nu}\ct{n}{^\rho}\pt{\D}{^\alpha_{\mu\nu\rho}} $ in the limit to infinity. This by itself strongly suggests that $ \ct{\Q}{^\alpha} $ contains information relative to the states of intrinsic gravitational radiation in the limit to infinity.
		
		Some further relevant properties of the radiant super-momentum are: 
			\begin{enumerate}
					\item \label{Q-future-property} $  \ct{\Q}{^\mu}$ is null $  \ct{\Q}{^\mu}\ct{\Q}{_\mu} \eqs 0  $ and future pointing at $\scri$, as follows from known properties of Bel-Robinson tensors \cite{Bergqvist_2004,Senovilla2000a}.\footnote{This is most easily seen using spinors \cite{Penrose1986}, because in spinorial form $\ct{\D}{_{\alpha\beta\gamma\delta}}\leftrightarrow d_{ABCD} \bar{d}_{A'B'C'D'}$ where $d_{ABCD}$ is the fully symmetric spinor equivalent to $d_{\mu\nu\tau}{}^\rho$.}
					\item $ \ct{\Q}{_\alpha} = \pt{\Q}{_\alpha}  $ and $ \ct{\Q}{^\alpha} = \Omega^{-2}\pt{\Q}{^\alpha}  $ on $ \ps{M} $.
					\item Under gauge transformations it transforms as 
					\be \ct{\Q}{^\alpha} \rightarrow \omega^{-7} \left(\ct{\Q}{^\alpha} -3\frac{\Omega}{\omega} D^\alpha{}_{\beta\rho\tau}n^\beta n^\rho \nabla^\tau\omega\right) +O(\Omega^2).\label{Qgauge}
					\ee
					\item \label{divergence-free-property} If the energy-momentum tensor of the physical space-time $(\hat M,\hat g_{\mu\nu})$ behaves approaching $\scri$ as $ \pt{T}{_{\alpha\beta}}|_\scri \sim \mathcal{O}(\Omega^3)$ (which includes the vacuum case  $ \pt{T}{_{\alpha\beta}}=0 $), then 
					$$ 
					\cd{_\mu}\ct{\Q}{^\mu}\eqs 0 . 
					$$
			\end{enumerate}
		From \eqref{Qgauge} one can derive the gauge change $\nabla_\mu \Q^\mu \stackrel{\scri}{\rightarrow}  \omega^{-7}\nabla_\mu\Q^\mu$, so that the statement of property \ref{divergence-free-property} is gauge invariant and leads to a balance law at $ \scri $, see section \ref{sec:balance}. The condition on the energy-momentum tensor can be replaced by a pure geometrical condition, namely that the rescaled Cotton tensor vanishes at $ \scri $, which avoids using any field equations.

		On a given cut $ \Sc $, one can split the radiant super-momentum into its null transverse (along $\ell^\alpha$) and tangent parts to $ \scri $,
			\begin{equation}
			\ct{\Q}{^\alpha}\eqc \cs{\W}\ct{\ell}{^\alpha} + \ctp{\Q}{^\alpha}=\cs{\W}\ct{\ell}{^\alpha} + \ctp{\Q}{^a}\ct{e}{^\alpha_a}\quad ,
			\end{equation}
		where
			\begin{eqnarray}\label{defRadiantSP}
			\cs{\W}&:=& -\ct{n}{_\mu}\ct{\Q}{^\mu} \quad ,\\
			\ctp{\Q}{^a}&\defeqc& \cs{\Z}\ct{n}{^a} + \ctp{\Q}{^A}\ct{E}{^a_A}\quad \text{with}
			\hspace{1cm} \cs{\Z}\defeqc -\ct{\ell}{_\mu}\ct{\Q}{^\mu} \geq 0\quad .\label{Z}
			\end{eqnarray}
		Notice that these quantities are analogous to the standard super-Poynting (\ref{P}) and super-energy density (\ref{W}) but with the advantage of being observer-independent and corresponding to a lightlike decomposition: $\Z$ and $\ctp{\Q}{^A}$ depend only on the cut, while $\W$ is fully intrinsic to $\scri$. We will refer to $ \cs{\W} |_\scri$ as the \emph{radiant super-energy density} and to $ \ctp{\Q}{^a}|_{\cal{S}} $ as the \emph{radiant super-Poynting vector} at $\cal S$. The former satisfies $ \W|_\scri\geq 0 $ and vanishes only if $\ct{\Q}{^\alpha}$ is aligned with $n^\alpha$, as follows from property \ref{Q-future-property}. 
		Actually, note that property \ref{Q-future-property} implies $\ctp{\Q}{^A}\ctp{\Q}{_A}\eqc2\cs{\W}\cs{\Z}$, so that the vanishing of either $\cs{\Z}$ or $\cs{\W}$ implies $\ctp{\Q}{^A}=0$.

		The Bianchi identity for the Weyl tensor, projected to $ \scri $ and written in terms of the News tensor 
		implies \cite{Geroch1977,Ashtekar81}
		\begin{equation}
			2\cds{_{[a}}\ct{N}{_{b]c}}=-H_{abc}, \hspace{1cm} H_{abc}:\eqs  e^\alpha{}_a e^\beta{}_b e^\gamma{}_c\, \,  d_{\alpha\beta\gamma}{}^\mu n_\mu.
			\end{equation}
		By using this equation and some properties associated to the lightlike decomposition of the rescaled Bel-Robinson tensor, a straightforward calculation 
provides the relation between the radiant super-momentum and the News tensor on a given cut $ \Sc $ (using (\ref{Ndot})): 
		\begin{align}
			\cs{\W}&\eqc 2\dot {N}^{RT}\dot {N}_{RT} \geq 0\quad ,\label{nQ}\\
			\cs{\Z}&\eqc 4 \cdc{^{[M}}\ct{N}{^{N]C}}\cdc{_{[M}} \ct{N}{_{N]C}}=2 D_CN^C{}_A D_B N^{BA}  \geq 0\quad ,\label{lQ}\\
			\ctp{\Q}{^A} &\eqc 	 8\dot N_{MC}\cdc{^{[M}} \ct{N}{^{A]C}} = -4 \dot N^{MA} D_E N^E{}_M\quad .\label{FQ}	
			\end{align}
		It is significant that the radiant super-momentum contains the information quadratic in the first derivatives of the News tensor.

	\section{The radiation condition}
		As our main result, we prove a theorem characterising the presence of radiation at $\scri$:
			\begin{thm}[Radiation condition]\label{th}
				 There is no gravitational radiation on a given cut $ \Sc \subset \scri $ if and only if the radiant super-Poynting $ \ctp{\Q}{^a} $ vanishes on that cut:\\
				 \begin{equation*}
				   \ct{N}{_{AB}}= 0 \quad\liff\quad \ctp{\Q}{^a}\eqc 0 \quad (\liff\quad \cs{\Z}= 0).
				 \end{equation*}
				
			\end{thm}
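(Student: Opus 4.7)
My plan is to chain the implications
\[
 N_{AB}=0 \ \Longrightarrow\ \cs{\Z}\eqc 0 \ \Longrightarrow\ \ctp{\Q}{^a}\eqc 0\ \Longrightarrow\ \cs{\Z}\eqc 0 \ \Longrightarrow\ N_{AB}=0.
\]
The equivalence $\ctp{\Q}{^a}\eqc 0 \Leftrightarrow \cs{\Z}\eqc 0$ is essentially algebraic: the identity $\ctp{\Q}{^A}\ctp{\Q}{_A}\eqc 2\cs{\W}\cs{\Z}$ already recorded in the text (which itself is a direct consequence of $\ct{\Q}{^\mu}$ being null and of the adapted decomposition along $\vec n,\vec\ell,E^\alpha{}_A$), together with the Riemannian character of the cut metric $q_{AB}$, forces $\ctp{\Q}{^A}=0$ as soon as $\cs{\Z}=0$. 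The implication $N_{AB}=0 \Rightarrow \cs{\Z}\eqc 0$ is immediate from \eqref{lQ}, because $D_B N^{B}{}_A$ is a purely tangential derivative on $\Sc$ and therefore vanishes whenever $N_{AB}$ does on the whole cut.

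Hence the entire analytic content of the theorem lies in $\cs{\Z}\eqc 0 \Rightarrow N_{AB}=0$. I would first observe that \eqref{lQ} writes $\cs{\Z}$ as the squared $q$-norm of the 1-form $D_B N^{B}{}_{A}$, so Riemannian positive-definiteness upgrades $\cs{\Z}\eqc 0$ to the pointwise conclusion $D^B N_{BA}\eqc 0$ on $\Sc$. The theorem is thus equivalent to a two-dimensional rigidity statement on the topological sphere $\Sc\simeq\mathbb{S}^2$: every symmetric, trace-free, divergence-free tensor field on $\Sc$ vanishes identically.

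To close this last step I would introduce the standard scalar-potential representation
\[
 N_{AB}=\bigl(D_A D_B-\tfrac{1}{2}q_{AB}\Delta\bigr)\phi+\epsilon_{(A}{}^{C}D_{B)}D_C\psi,
\]
valid for any symmetric traceless tensor on a closed surface, substitute it into $D^B N_{BA}=0$, and commute covariant derivatives by absorbing the Gaussian-curvature term $[\Delta,D_A]=K D_A$, obtaining coupled elliptic equations for $\phi$ and $\psi$. Pairing them against $\phi$ and $\psi$ and integrating by parts on the closed cut $\Sc$ produces manifestly non-negative expressions that vanish only when $\phi$ and $\psi$ sit in the kernel of the trace-free Hessian operator; for such $\phi,\psi$ the right-hand side of the decomposition collapses and $N_{AB}\eqc 0$. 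I expect this rigidity lemma to be the main obstacle: it amounts to the absence of transverse-traceless symmetric tensors on $\mathbb{S}^2$ (or, dually, of holomorphic quadratic differentials on $\mathbb{CP}^1$), and rendering it self-contained within the present gauge---in which $\Sc$ is not assumed to carry the round metric but still inherits a Riemannian metric on a topological sphere---requires some care with the interplay between the gauge choice and the metric $q_{AB}$.
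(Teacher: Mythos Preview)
Your chain of implications is exactly the paper's: from \eqref{lQ} one gets $\cs{\Z}=0\Leftrightarrow D^{B}N_{BA}=0$, from \eqref{FQ} and \eqref{Z} this is equivalent to $\ctp{\Q}{^a}=0$, and the remaining analytic content is the rigidity statement ``a symmetric, traceless, divergence-free (equivalently, Codazzi) two-tensor on a closed topological sphere vanishes''. The paper does not prove this rigidity; it simply cites it.

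Where your proposal needs care is precisely in the attempted proof of that rigidity. The potential representation
\[
 N_{AB}=\bigl(D_A D_B-\tfrac{1}{2}q_{AB}\Delta\bigr)\phi+\epsilon_{(A}{}^{C}D_{B)}D_C\psi
\]
is \emph{not} valid for an arbitrary symmetric traceless tensor on an arbitrary closed surface: the image of $(\phi,\psi)\mapsto N_{AB}$ is the $L^2$-orthogonal complement of the transverse-traceless tensors, so assuming the decomposition already presupposes that the TT part vanishes --- which is exactly what you want to show. Your integration-by-parts argument then only establishes the tautology that a divergence-free tensor \emph{in the potential sector} is zero, leaving the genuinely TT piece untouched. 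The non-circular route is the one you mention parenthetically: in 2D, traceless $+$ divergence-free is equivalent (in a local complex coordinate) to holomorphicity of the quadratic differential $N_{zz}\,dz^2$, and on $\mathbb{CP}^1$ there are no nonzero holomorphic quadratic differentials (Riemann--Roch, or the elementary fact that a bounded entire function with the required falloff at the pole vanishes). This argument uses only the topology (genus zero) and the conformal class of $q_{AB}$, so it goes through in any gauge on $\Sc$.
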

			\begin{proof}
				Consider equation \eqref{lQ}. Since the right-hand side is a square, it follows that $\cs{\Z}= 0 \iff \cdc{_{[A}} \ct{N}{_{B]C}} =  0 $. Using now equation \eqref{FQ} together with (\ref{Z}), this happens if and only if $ \ctp{\Q}{^a}= 0 $. But $\cdc{_{[A}} \ct{N}{_{B]C}} =  0$ ---which is equivalent to $D_A N^A{}_B =0$--- states that $ \ct{N}{_{AB}} $ is a symmetric and traceless Codazzi tensor on the compact 2-dimensional $\Sc$, and then it necessarily vanishes (e.g. \cite{Liuetal} and references therein). Equivalently, $N_{AB}$ is a traceless symmetric divergence-free tensor on the closed $\Sc$, which implies that $N_{AB}=0$. Hence $  \ct{N}{_{AB}}= 0  \iff \ctp{\Q}{^a}= 0 $ on $\Sc$ .
			\end{proof}
			\begin{remark}\label{thremark}
				This theorem can be equivalently stated, using property \ref{Q-future-property}, as: {\em there is no gravitational radiation on a given cut $ \Sc \subset \scri $ if and only if the radiant super-momentum is orthogonal to $\Sc$ everywhere and not co-linear with $n^\alpha$}. That is to say: $\ct{N}{_{AB}}= 0 \iff  \ct{\Q}{^\alpha}\eqc \cs{\W}\ct{\ell}{^\alpha}$. 
				Notice that, given a cut, this statement is totally unambiguous.
			\end{remark}

		Therefore, the presence of gravitational radiation is characterised equivalently with $ \ct{\Q}{^\alpha} $  or with $ \ct{N}{_{ab}} $. 
		Let $\Delta\subset\scri$ denote an open portion of $\scri$ with topology $\mathbb{S}^2\times\mathbb{R}$. 
\begin{thm}[No radiation on $ \Delta$] \label{norad2} 
There is no gravitational radiation on the open portion $ \Delta \subset \scri $ if and only if the radiant super-momentum $ \Q{^\alpha} $ vanishes on $\Delta$:
				 \begin{equation*}
				   \ct{N}{_{ab}}\stackrel{\Delta}{=}  0 \quad\liff\quad \Q{^\alpha}\stackrel{\Delta}{=}  0 .
				 \end{equation*}
\end{thm}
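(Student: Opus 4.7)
My plan is to reduce Theorem~\ref{norad2} to Theorem~\ref{th} by exhibiting a foliation of $\Delta$ by cuts and running the cut-wise equivalence slice by slice. The topology $\mathbb{S}^2\times\mathbb{R}$ of $\Delta$, together with the fact that the flow of $\vec n$ preserves the intrinsic geometry of $\scri$ ($\cds{_a}\ct{n}{^b}=0$ and $\lied_{\vec n}\ctp{g}{_{ab}}=0$), allows one to pick any cut $\Sc_0\subset\Delta$ and Lie-transport it along $\vec n$ to obtain a one-parameter family $\{\Sc_u\}$ whose union covers $\Delta$. In this way every point of $\Delta$ lies on some cut and all the analysis can be carried out slice by slice.

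For the $(\Rightarrow)$ direction, assume $\ct{N}{_{ab}}\stackrel{\Delta}{=}0$. Because $\Delta$ is open in $\scri$, the vanishing of $\ct{N}{_{ab}}$ on all of $\Delta$ forces every tangential derivative of $\ct{N}{_{ab}}$ to vanish on $\Delta$ as well; in particular on each $\Sc_u$ one has $\dot N_{AB}=0$ and $D_A \ct{N}{_{BC}}=0$. Substituting these zeros into the identities \eqref{nQ}, \eqref{lQ} and \eqref{FQ} gives $\cs{\W}=\cs{\Z}=0$ and $\ctp{\Q}{^A}=0$ on every cut, i.e.\ $\ct{\Q}{^\alpha}\stackrel{\Sc_u}{=}0$, and since the $\Sc_u$ exhaust $\Delta$ one concludes $\ct{\Q}{^\alpha}\stackrel{\Delta}{=}0$.

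For the $(\Leftarrow)$ direction, I would restrict the hypothesis $\ct{\Q}{^\alpha}\stackrel{\Delta}{=}0$ to each cut $\Sc_u\subset\Delta$ to read off $\ctp{\Q}{^a}\stackrel{\Sc_u}{=}0$, and then invoke Theorem~\ref{th} on that cut to conclude $\ct{N}{_{AB}}\stackrel{\Sc_u}{=}0$ for every $u$. Since $\ct{N}{_{ab}}$ is symmetric and satisfies $\ct{n}{^a}\ct{N}{_{ab}}=0$, in any basis $\{\ct{n}{^a},\ct{E}{^a_A}\}$ adapted to $\Sc_u$ the only potentially nonzero components of $\ct{N}{_{ab}}$ are the $\ct{N}{_{AB}}$, so cut-by-cut vanishing promotes to $\ct{N}{_{ab}}\stackrel{\Delta}{=}0$. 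I do not foresee a genuine obstacle: Theorem~\ref{th} already carries all the analytic content, and the only steps needing a sentence of care are the existence of the covering foliation by cuts—immediate from the $\mathbb{S}^2\times\mathbb{R}$ product structure—and the reconstruction of $\ct{N}{_{ab}}$ on $\Delta$ from its cut pullbacks $\ct{N}{_{AB}}$.
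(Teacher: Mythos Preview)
Your proof is correct and coincides with what the paper itself calls its ``another route'': foliate $\Delta$ by cuts, observe that $N_{ab}\stackrel{\Delta}{=}0$ forces $\dot N_{AB}=0$ (hence $\W=0$ via \eqref{nQ}) on each leaf, and run Theorem~\ref{th} cut by cut for the converse, reassembling $N_{ab}$ from its pullbacks using $n^aN_{ab}=0$.

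The paper's \emph{primary} argument for the $(\Rightarrow)$ direction is slightly different and worth contrasting. Instead of fixing a single foliation, it invokes Remark~\ref{thremark}: absence of radiation on every cut through a point $p\in\Delta$ means $\Q^\alpha|_p=\W\,\ell^\alpha$ for \emph{every} admissible $\ell^\alpha$ at $p$; but $\ell^\alpha$ genuinely depends on the cut while $\Q^\alpha$ does not, so the only way this can hold is $\Q^\alpha|_p=0$. This is a one-line geometric observation that avoids computing $\W$ explicitly, whereas your route is more computational but entirely self-contained within formulas \eqref{nQ}--\eqref{FQ}. Both are valid; the paper in fact records both.
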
	
\begin{proof}
According to remark \ref{thremark} of theorem \ref{th}, absence of radiation on $\Delta$ requires that $\ct{\Q}{^\alpha}\eqc \cs{\W}\ct{\ell}{^\alpha}$ on {\em every possible cut} $\Sc$ included in $\Delta$. But this is only possible if $\ct{\Q}{^\alpha}\stackrel{\Delta}{=} 0$. Another route to derive this result is to note that $N_{AB}=0$ on every cut within $\Delta$, and thus $N_{ab}\stackrel{\Delta}{=} 0$. In particular $\lied_{\vec n} N_{ab}\stackrel{\Delta}{=} 0$ so that $\dot N_{AB}$ vanishes too at any cut within $\Delta$. 
\end{proof}

\begin{remark}
				The condition $ \Q{^\alpha}\stackrel{\Delta}{=} 0 $ implies \cite{Bergqvist1998,Senovilla2011} that $ \ct{n}{^\alpha} $ is a multiple principal null direction of $ \ct{d}{_{\beta\gamma\delta}^\alpha} $ on $\Delta$, which is in accordance with the discussion in \cite{Krtou2004}.
			\end{remark}
			\begin{remark}
A practical good property of the novel characterisation provided by theorems \ref{th} and \ref{norad2} is that the condition $\Q{^\alpha}= 0$ is computation friendly, avoiding the calculation of the news tensor.
\end{remark}

\noindent	
 \subsection{The case of sandwich waves}\label{sandwich1}	
	Let now $\Delta\subset\scri$ be an open portion of $\scri$ as before, connected and limited by two cuts $\Sc_1$ and $\Sc_2$, with $\Sc_2$ entirely to the future of $\Sc_1$.	
	We assume that $ \ct{N}{_{ab}} $ is at least continuous on $\scri$.
		A sandwich wave is defined by the existence of radiation inside the open connected portion $ \Delta$, but not outside it (locally). This case is characterised by non-vanishing functions $\Z$, equivalently by a non-vanishing $\Q{^\alpha}$, on $\Delta$. 
		
				 The generic situation for sandwich waves has $\W\neq 0$ everywhere on $\Delta$. However, a special situation arises if  $\W$ vanishes somewhere on $\Delta$. On such regions the radiant super-momentum aligns with the generators of $ \scri $, and $n^\alpha$ is a simple principal null direction of $ \ct{d}{_{\beta\gamma\delta}^\alpha} $ there. 
				Notice that this  alignment occurs where $\lied_{\vec{n}}N_{ab}=0$ . 
				Observe that, by continuity, $N_{ab}$ vanishes at the two boundary cuts $\Sc_1$ and $\Sc_2$, but one must have $\lied_{\vec{n}}N_{ab}\neq 0$ (ergo $\dot N_{AB}\neq 0$) there.

	\section{A balance law at null infinity}\label{sec:balance}
		As a second result, we present a balance law at $ \scri $ at the super-energy level. We use the same notation as before for the connected $ \Delta \subset\scri $ and its bounding cuts $ \Sc_{1} $ and $ \Sc_{2} $. On $\Sc_{1,2}$ we have the corresponding null vector fields $\ell^\alpha_{1}$ and $\ell^\alpha_2$ and news ${}_1 N_{AB}$ and ${}_2N_{AB}$. We denote by $L^\alpha$ {\em any} null vector field defined on $\Delta$ such that 
		\be
		n_\alpha L^\alpha \stackrel{\Delta}{=} -1, \hspace{1cm} L^\alpha\stackrel{\Sc_{1}}{=} \ell^\alpha_{1}, \hspace{1cm} L^\alpha\stackrel{\Sc_{2}}{=} \ell^\alpha_{2}.\label{Ldef}
		\ee
		 Observe that there is a large freedom to choose $L^\alpha$.  The computation of the projected derivative of $ \ct{\Q}{^\alpha} $ at $\Delta\subset  \scri $, together with property \ref{divergence-free-property}, leads to\footnote{One may be tempted to rewrite the left-hand side in (\ref{balancediff}) as $\nabla_\mu (\W L^\mu)$, but this would require extending $L^\mu$ outside $\scri$, for instance geodesically: $L^\mu\nabla_\mu L^\nu=0$.}
			\begin{equation}\label{balancediff}
				L{^\mu}\cd{_\mu}\cs{\W} + \cs{\W}\ct{\psi}{^a_a} \stackrel{\Delta}{=}  -\cds{_a}\ctp{Q}{^a}
			\end{equation}
			where $\psi^a_b$ is a tensor field on $\Delta$ defined by $e^\mu{}_a\nabla_\mu L^\nu \stackrel{\Delta}{:=} \psi^b_a e^\nu{}_b$.
Equation \eqref{balancediff} has the form of a continuity law and, moreover, we can integrate and bring it into a flux form:
			\begin{equation}\label{balanceint1}
			 	\int_{\Delta}\left(L{^\mu}\cd{_\mu}\cs{\W} + \cs{\W}\ct{\psi}{^a_a}\right) \boldsymbol{\epsilon}= \int_{\Sc_2}	\Z \boldsymbol{\mathring{\epsilon}} -  \int_{\Sc_1} \Z \boldsymbol{\mathring{\epsilon}}
			\end{equation}
			where $\boldsymbol{\mathring{\epsilon}}$ is the canonical volume 2-form associated to $q_{AB}$ on the cuts. Interestingly, the left-hand side represents the flux of radiant super-energy \emph{escaping} from the space-time in any outgoing null direction $ L{^\alpha} $ and this is controlled by the positive quantity $ \int_{\Sc}	\cs{\Z} $ evaluated at the boundaries of $\Delta$. 
			 Note that the left-hand side is independent of the choice of $L^\mu$. To see this, take any other possible $\tilde L^\mu$ subject to (\ref{Ldef}), so that necessarily $s^\mu :=\tilde L^\mu -L^\mu $ is such that $n_\mu s^\mu =0$ ergo $s^\mu =s^ae^\mu{}_a$ with $s^a\stackrel{\Sc_{1,2}}{=}0$. Hence $\tilde\psi^b_a =\psi^b_a +\overline\nabla_as^b$ and then
			$$
			\int_{\Delta}\left(\tilde L{^\mu}\cd{_\mu}\cs{\W} + \cs{\W}\ct{\tilde\psi}{^a_a}\right) \boldsymbol{\epsilon} -\int_{\Delta}\left(L{^\mu}\cd{_\mu}\cs{\W} + \cs{\W}\ct{\psi}{^a_a}\right) \boldsymbol{\epsilon}=\int_\Delta \overline\nabla_a (\W s^a) =0.
			$$
			On the other hand, neither of the integrals in the balance formula \eqref{balanceint1} are gauge invariant (as the integrands change by a factor $\omega^{-4}$), nevertheless \eqref{balanceint1} holds true in any gauge.
		
		For the situation in theorem \ref{norad2}, the balance law is trivially satisfied. For sandwich waves as defined in section \ref{sandwich1}, 
		the integrals on the right-hand side vanish and, therefore,
			\begin{equation}\label{balanceforsandwich}
				\int_{\Delta}\left(L{^\mu}\cd{_\mu}\cs{\W} + \cs{\W}\ct{\psi}{^a_a}\right)\boldsymbol{\mathring{\epsilon}} = 0\quad \mbox{for sandwich waves} .
			\end{equation}
		This is actually always the case when ${}_1N_{AB}={}_2N_{AB}$ (as all cuts are isometric, this equality makes sense).

	\subsection{Relation to the Bondi-Trautman energy loss
	}\label{dimensions-BR-tensor}
	By inserting equation \eqref{lQ} into the right-hand side of (\ref{balanceint1})
and manipulating the integrand a little, it is possible to write the following alternative version:
			\begin{equation}\label{balanceint2}
				\int_{\Delta}\left(L{^\mu}\cd{_\mu}\cs{\W} + \cs{\W}\ct{\psi}{^a_a}\right)\boldsymbol{\epsilon}= \int_{\Sc} \ct{N}{_{AB}}\left(2K\ct{N}{^{AB}}-\cdc{_C}\cdc{^C}\ct{N}{^{AB}}\right)\boldsymbol{\mathring{\epsilon}}\, \Bigg|_{\Sc_1}^{\Sc_2}
			\end{equation}
		where $ K $ is the Gaussian curvature of the cuts. 
		
		Let $F\in C^\infty (\scri)$ be any smooth function on $\scri$ such that $\varphi:=\lied_{\vec n} F\neq 0$ everywhere. Any such $F$ defines a foliation of $\scri$ by smooth cuts, each cut given by $F=C$ with constant $C$. Let $\cs{E}[][BT] (C)$ denote the Bondi-Trautman energy \cite{Bondi1962,Geroch1977,CJM,Szabados2004} of these cross sections relative to a chosen infinitesimal translation \cite{Geroch1977} $\alpha n^a$, where $ \cs{\alpha} $ is a function which necessarily satisfies $ \lied_{\vec{n}}\alpha \eqs 0 $. In our conventions, the infinitesimal Bondi-Trautman energy loss on a cut is given by 
			\begin{equation}\label{beloss}
				\frac{d\cs{E}[][BT]}{d C} = -\frac{1}{8\pi}\int_{\Sc}\frac{\cs{\alpha}}{ \varphi}\ct{N}{_{AB}}\ct{N}{^{AB}}\boldsymbol{\mathring{\epsilon}} \quad .
			\end{equation}
		  Without restricting the gauge freedom to any particular choice, it is always possible to adapt the foliation to the generators of $ \scri $ by setting $ \lied_{\vec{n}}\varphi = 0 $. Comparing \eqref{beloss} with the first term on the right-hand side of formula \eqref{balanceint2}, we see that the Bondi-Trautman energy-loss term appears in the balance law  for cuts satisfying $K\varphi/\alpha=$ constant ---in particular, this holds by choosing an appropriate gauge, which is always possible after selecting an adapted foliation. Under such a choice (\ref{balanceint2}) can be written as
		\begin{equation}\label{balanceint3}
				\int_{\Delta}\left(L{^\mu}\cd{_\mu}\cs{\W} + \cs{\W}\ct{\psi}{^a_a}\right)\boldsymbol{\epsilon}=\left[-16\pi K\frac{\varphi}{\alpha} \frac{d\cs{E}[][BT]}{d C}  -\int_{\Sc} \ct{N}{_{AB}}\cdc{_C}\cdc{^C}\ct{N}{^{AB}}\boldsymbol{\mathring{\epsilon}}\right]\, \Bigg|_{\Sc_1}^{\Sc_2} \, .
			\end{equation}
		This nice formula allows us to perform a quick check of the physical units\footnote{We use the notation $ [P] $ to denote the {\em physical} units of any object $ P $; our choice is that the conformal factors $\Omega$ and $\omega$ are dimensionless.}: $ [\cs{E}[][BT]] = ML^2T^{-2} $ so that $ [d\cs{E}[][BT]/d C] =ML^2T^{-2} [C]^{-1}$. As $ [\varphi]=[C]L^{-2} $ and taking into account $ \cs{[\alpha]} = L $ the right-hand side of (\ref{balanceint3}) has dimensions of $[K\varphi/\alpha]ML^2T^{-2} [C]^{-1}=ML^{-3}T^{-2}$. Concerning the left-hand side, using that $[L^\mu]=L$ and that $[\W]=\left[\ct{\T}{_{\alpha\beta\gamma\delta}} \right] L^{-4}$, we need to know the units of the volume integral on $\scri$ but, according to \eqref{vol}, these are $[\boldsymbol{\epsilon}]=L^4$. Hence, $ \left[\ct{\D}{_{\alpha\beta\gamma\delta}} \right]= MT^{-2}L^{-3} $ and the physical units of the Bel-Robinson tensor are
			$$
			 \left[\ct{\T}{_{\alpha\beta\gamma\delta}} \right] = MT^{-2}L^{-3}\quad .
		 $$
		Even though the units of $ \ct{\D}{_{\alpha\beta\gamma\delta}} $ depend on the dimensions of the conformal factor, the result for the Bel-Robinson tensor is independent of this choice. This is in agreement with \cite{Horowitz82} and \cite{Teyssandier1999mg}, see also \cite{Szabados2004,Senovilla2000a}, because \emph{the Bel-Robinson tensor has physical dimensions of energy density per unit surface}, so that to recover physical units in the equations where it appears one should write
			\begin{equation*}
				\frac{c^4}{G}\ct{\T}{_{\alpha\beta\gamma\delta}}\quad .
			\end{equation*}		
			
	\section{Discussion}
		We have introduced the radiant super-momentum, which is an observer-independent, geometrically well-defined quantity at $\scri$, containing information quadratic in the first derivatives of the News tensor. We have proven that an alternative characterisation of gravitational radiation at null infinity exists at the super-energy level, and that it is completely equivalent to the usual News tensor criterion. Moreover, we have obtained a new balance law at infinity which describes the outgoing flux of super-energy density due to the presence of gravitational waves arriving to a region of $ \scri $. Remarkably, this flux depends on the News tensor on the cuts that bound this region. This fact suggests defining some general `radiation states' on the cuts, the flux of super-energy between two sections of $  \scri $ representing the fail of the system to recover its initial state. Equation \eqref{balanceforsandwich} may be seen as a condition for the system to recover its initial state.
		
		We will extend these ideas in forthcoming work and, particularly, we will address the $ \Lambda>0 $ case in \cite{Fernandez-Alvarez_Senovilla}.
			\subsection*{Acknowledgments}
				This work is supported under Grants No. FIS2017-85076-P (Spanish MINECO/AEI/FEDER, UE) and No. IT956-16 (Basque Government).
		\bibliography{novel_characterisation_of_radiation_afs_letter}{}
	\end{document}